\newtheorem{lemma}{Lemma}
\newcommand{\matr}[1]{\mathbf{#1}}
\title{Crossing Reduction of Sankey Diagram with Barycentre Ordering via Markov Chain}
\author[1,+]{He Chen LI}
\author[1,+]{Shi Ying LI}
\author[1]{Bo Wen TAN}
\author[1,2,*]{Shuai Cheng LI}
\affil[1]{Department of Computer Science, City University of Hong Kong, Kowloon Tong, Hong Kong }
\affil[2]{Department of Biology Engineering, City University of Hong Kong, Kowloon Tong, Hong
Kong}
\affil[+]{These authors contributed equally to this work.}
\affil[*]{To whom correspondence should be addressed.}
\date{\today}
\begin{document}
\maketitle

\begin{abstract}
Sankey diagram is popular for analyzing primary flows in network data. However, the growing complexity of data and hence crossings in the diagram begin to reduce its readability. In this work, we studied the NP-hard weighted crossing reduction problem of the Sankey diagram with both the common parallel form and the circular form. We expect to obtain an ordering of entities that reduces weighted crossings of links. We proposed a two-staged heuristic method based on the idea of barycentre ordering and used Markov chain to formulate the recursive process of obtaining such ordering. In the experiments, our method achieved 300.89 weighted crossings, compared with the optimum 278.68 from an integer linear programming method. Also, we obtained much less weighted crossings (87.855) than the state-of-art heuristic method (146.77). We also conducted a robust test which provided evidence that our method performed consistently against the change of complexity in the dataset.
\end{abstract}

% Introduction
\section{Introduction}
Sankey diagram depicts flows among entities in a system, where the thickness of a flow represents the flow quantity. Although Sankey diagram is originally introduced in \cite{sankey1898introductory} to display energy flows of a steam engine, its excellence in emphasizing dominant flows in a network makes it popular for analyzing sequence data, such as a vendor-to-customer network or page-viewing paths in Google Analytics (\cite{googleuse}). To better display data with increasing complexity, methods that automatically adjust the layout to enhance readability is in much desire. According to  \cite{Alemasoom2016}, a proper layout of the Sankey diagram should meet the following three criteria: minimum edge intersection, short-as-possible edge lengths and straight edges. In this paper, we focus on the first criteria: reducing the number of edge crossing.
    
The most common form of a Sankey diagram, called the \emph{parallel form}, is a horizontally-layered diagram with entities assigned to different vertical layers and flows with various thicknesses connecting entities between layers (normally from left to right). By viewing entities as vertices and flows edges, we can transform the hierarchical structure of a Sankey diagram into a layered graph. In this way, layout optimization of a Sankey diagram can be generalized as the layered graph drawing problem formulated by \cite{warfield1977crossing}, which is further decomposed into several sub-problems in \cite{sugiyama}. Among the sub-problems our work focuses on the NP-hard (proven in ~\cite{optimal}) crossing minimization problem where an ordering of vertices for each layer is determined to achieve minimum edge crossing. However, our problem differs from the classic crossing minimization problem in that each edge crossing is weighted as the intersected edges bear different weights from the thicknesses of flows.
    
For the classic crossing minimization problem (without considering edge weight), the most famous heuristic method is the barycentric (BC) method proposed by ~\cite{sugiyama}. In this method, each vertex is sorted  among its layer in ascending order of the barycentre of vertices connecting to it. For crossing reduction with weighted edges, \cite{Alemasoom2016} gives an approach combining the BC method and linear programming. This combined method uses the ordering produced by the BC method as input of the linear programming method whose objective function is minimizing the weighted sum of distances between connected nodes. That is, the linear programming method improves the placement of nodes within each layer determined previously by the BC method. However, the resultant layout is non-optimal since the heuristic BC method is used without considering edge weights. For exact solution, there is an integer linear programming (ILP) model with the objective being minimizing the sum of weighted crossing~\cite{optimal}. The attained optimal layout, in comparison to the layout obtained from the BC method, shows remarkable improvement on crossing reduction and readability improvement.

Another form of Sankey diagram with increasing popularity is the \emph{cycle form} with flows travelling in opposite direction (right to left) representing reversed data flows such as the recycle of resources. In this case, we study the crossing reduction problem for a specific type of the cycle form where the reverse flows only exist between the last (rightmost) layer and the first (leftmost) layer. There is not yet an algorithm to the crossing reduction problem on this type of Sankey diagram.

In this work, we first propose a two-staged heuristic method to reduce weighted crossing in the parallel form of the Sankey diagram (Section 2). In the first stage, we design a Markov Chain Method where we formulate the process of obtaining an ordering as a Markov chain and solve it with the eigenvector corresponds to the second largest eigenvalue of the Markov transition matrix. The solution is sufficient while non-optimal. For the second stage, we design a recursive Partition Refinement Method to further improve the ordering from the first stage. In this method, a vertex is given a range within the level and it gives different value within the range when used to calculate the barycentre of a connected vertex. We iterate the ordering in a back-and-force manner among layers until the positions of vertices remain unchanged.

In the following Section 3, we show a modified version of the above method that is applicable for reducing weighted crossing on our specified cycle form of the Sankey diagram. In the beginning, we ignore the connection between the first and the last level such that the formulated graph becomes parallel again, allowing us to obtain a partially calculated barycentre ordering. Subsequently, the second stage of this amended method undergoes a circular iteration route to also include the connection between the first and last levels.

In the Experiment Section, we first show the effect of our method on the parallel form by comparing the resultant ordering with those of the exact ILP method, the heuristic BC method and the combined method. The comparison includes both the visual effects as well as the weighted and non-weighted number of crossed edges. We find that our method is able to produce much better ordering than the two heuristic methods. In the process we also compare the difference between orderings produced in the two stages to demonstrate the effectiveness of both stages. For the cycle form, as there is no other methods for comparison, we apply our modified method on an artificial dataset with known optimal ordering. The result shows that in this case we are able to achieve optimal ordering even in the first stage. We also select a non-optimal output from stage 1 by using a different parameter set to verify the effectiveness of the second stage, which still produces the optimal ordering. Finally, we conduct a robustness test where we vary the complexity of the dataset and use the result from the ILP method as comparison. The test result verifies the stability of our method against the change of datasets. 

% Method - parallel
\section{Method on the Parallel Form}

We start by formulating a Sankey diagram in parallel form with $n$ layers as a $n$-level layered graph $G$. We regard each entity in the Sankey diagram as a vertex $v_i$ and let $V$ denote the set of all vertices in $G$. We say $v_i$ and $v_j$ belong to the same level in the graph if the corresponding entities lie in the same layer in the diagram. Set of vertices in the $i$-th level is denoted as $V_i$ and ${V_1, V_2, ..., V_n}$ form a partition of $V$. Without loss of generality, we assume that in a Sankey diagram all links are formed between successive layers. For links connecting nodes belonging to non-successive levels, we follow the practice in \cite{optimal} and \cite{Alemasoom2016} where dummy entities are added to all crossed levels such that the "long link" becomes the composition of several "short links" of the same thickness as the "long link" itself. Consequently, we say an undirected edge $(v_i, v_j)$ exists only if $v_i$ and $v_j$ belong to successive levels and are connected by a link in the diagram. We then have the edge set of $G$ as
\begin{equation*}
     E = \{(v_i, v_j) \mid v_i \in V_p, v_j \in V_{p+1}, p \in [1, n-1]\}.
\end{equation*}

With our assumption, $E$ can be partitioned into $n-1$ subsets where each subset $E_i$ is the set of edges connecting vertices between $V_i$ and $V_{i+1}$. Weight of an edge $(v_i, v_j)$, denoted as $w(v_i, v_j)$, follows from the thickness of the corresponding link in the Sankey diagram.

For the n-level layered graph $G = (V, E, n)$, its ordering $\sigma$ is the set $\{\sigma_1, ..., \sigma_n\}$ where $\sigma_i$ denotes the ordering of vertices in $V_i$.
% We use $\sigma_{i,j}$ to .represent the index of the $j$-th vertex in the ordering.
With the formulated graph $G(V, E, n)$, our aim is to find an ordered graph $G(V, E, n, \sigma)$ with reduced weighted crossing. We measure the weighted crossing of an ordered graph by $\overline{K}$, the sum of production between weights of the crossed edges. Its calculation is a variation from the method of obtaining crossing number $K$ in \cite{warfield1977crossing} and is described in the following. Given ordering $\sigma$, we first define for each $E_i$ with a \emph{weighted interconnection matrix} $\matr{M^{(i)}}$ of size $|V_i| \times |V_{i+1}|$ where
\begin{equation}
    m^{(i)}_{j,k} = \left\{
\begin{array}{lcl}
w(v_j, u_k)     &      & if \; (v_j, u_k) \in E_i\\
0       &      & otherwise.
\end{array} \right.
\end{equation} 
In particular, we use $\matr{M^{(i)}_{j,:}}$ and $\matr{M^{(i)}_{:,k}}$ to denote the $j$-th row vector and transposed $k$-th column vector of $\matr{M^{(i)}}$.

To obtain the weighted crossing of $E_i$ for a particular pair of ordering $\sigma_{i}$ nad $\sigma_{i+1}$, we need to reorder the rows and columns $\matr{M^{(i)}}$ such that they comply with the given ordering. Therefore, we define for each ordering $\sigma_i$ a transformation matrix $\matr{A_i}$ with $a^{(i)}_{j,k}$ equals 1 if the $j$-th vertex in $\sigma_i$ has index $k$ and 0 otherwise. Then the transformed matrix, denoted as $\matr{\overline{M}^{(i)}}$, can be derived from the equation $\matr{\overline{M}^{(i)}} = \matr{A_{i}} \cdot \matr{M^{(i)}} \cdot \matr{A_{i+1}}$.

The weighted crossing of $E_i$ can therefore be calculated by the formula
\begin{equation}
    \overline{K}(\matr{\overline{M}^{(i)}}) = \sum^{|V_i|-1}_{j=1} \sum^{|V_i|}_{k=j+1} \big ( \sum^{|V_{i+1}|-1}_{p=1} \sum^{|V_{i+1}|}_{q=p+1} \overline{m}^{(i)}_{j,q} \times \overline{m}^{(i)}_{k,p} \big ).
\end{equation}
Subsequently, the total weighted crossing number of ordered graph $G$ with ordering $\sigma$ is defined as
\begin{equation}
    \overline{K}(G) = \sum^{n-1}_{i=1} \overline{K}(\matr{\overline{M}^{(i)}}).
\end{equation}

It has been shown in \cite{sugiyama} that \emph{barycentre ordering} can effectively reduce crossing number $K$. Here barycentre of a vertex is the weighted average of the position value of its connected vertices. A barycentre ordering places each vertex at its barycentre while avoiding the trivial case where all vertices share on barycentre value. To find such ordering, \cite{sugiyama} also gives the heuristic Barycentre(BC) Method in which weights among connected vertices are equal, regardless the different weights of the corresponding edges. However, in our case, crossing involving edge with larger weight will contribute more to the total weighted crossing number $\overline{K}(G)$ and therefore has higher priority to be avoided when deciding the ordering. That is, the weight of a connected vertex is proportional to the weight of the corresponding edge. Moreover, to define the positions of vertices, we view each level as a vertical line of height equal 1 and each vertex point on the vertical line takes a position value within $[0, 1]$. We further define the \emph{position vector} $\matr{u^{(i)}}$ of the $i$-th level where $u^{(i)}_j$ is the position of $v_j \in V_i$.

For a vertex $v_j \in V_i$ where $i \in [2, n-1]$, i.e. in one the the middle levels, all its connected vertices forms a neighboring vertex set $N(v_j)$. We further partition this set into the left neighboring vertex set $N_L(v_j)$ and the right neighboring set $N_R(v_j)$ containing vertices belonging to $(i-1)$-th level and the $(i+1)$-th level respectively. For $v_k$ in the first or last level, $N(v_j)$ consists of one one-sided neighboring set and an empty set for the other side. For each of the parted neighboring set, we have the vertex barycentre by the following equations
\begin{equation}
\label{eq:bc_l}
    B_{L}(v_j) = \left\{
\begin{array}{lcl}
% \frac{\sum^{|V_{i-1}|}_{l=1} m^{(i-1)}_{l, j} u^{(i-1)}_l}{\sum^{|V_{i-1}|}_{l=1} m^{(i-1)}_{l, j}}     &      & i \in [2, n]\\
\frac{1}{{||\matr{M^{(i-1)}_{:,j}}||}_1} \matr{M^{(i-1)}_{:,j}} \cdot u^{(i-1)}  &      & i \in [2, n]\\
B_{R}(v_j)       &      & i = 1,
\end{array} \right.
\end{equation}
\begin{equation}
\label{eq:bc_r}
    B_{R}(v_j) = \left\{
\begin{array}{lcl}
% \frac{\sum^{|V_{i+1}|}_{k=1} m^{(i)}_{j, k} u^{(i+1)}_k}{\sum^{|V_{i+1}|}_{k=1} m^{(i)}_{j, k}}      &      & i \in [1, n-1]\\
\frac{1}{{||\matr{M^{(i)}_{j,:}}||}_1} {\matr{M^{(i)}_{j,:}}}^{T} \cdot u^{(i+1)}  &      & i \in [1, n-1]\\
B_{L}(v_j)       &      & i = n.
\end{array} \right.
\end{equation}
where ${||\matr{x}||}_{1}$ is the $l_1$ norm of $\matr{x}$. Subsequently, we have the two-sided barycentre of $v_j$ as the average of the two one-sided barycentres:
\begin{equation}
    \label{eq:two_sided_barycentre}
    B(v_j) = \frac{B_{L}(v_j)+B_{R}(v_j)}{2}.
\end{equation}

Given position vector $\matr{u_{(i)}}$, $\sigma_i$ is the descending order of entries in $\matr{u_{(i)}}$ with the first vertex in the ordering placed uppermost in the level. Subsequently, to find a barycentre ordering is to find the corresponding position vectors where each entry is the barycentre of the corresponding connected vertices. To this end, we design a two-stage algorithm. In Stage 1, we introduce a Markov Chain Method which produces an ordering where most vertices satisfy the requirement for a barycentre ordering. In Stage 2, we give a Partition Refinement Method to refine the ordering from Stage 1 towards a complete barycentre ordering.

% Subsection Stage 1
\subsection{Stage 1: the Markov Chain Method}

In Stage 1, we start with one-sided barycentre and eventually we want each vertex to have equal left and right barycentres so as to achieve two-sided barycentre. Given position vector $\matr{u^{(i)}}$ where $i \in [1, n-1]$, we can update the the position vector $\matr{u^{(i+1)}}$ such that $u^{(i+1)}_j = B_{L}(v_j)$ for each $v_j \in V_{i+1}$. Rewrite this process in matrix form, we have
\begin{equation}
    \label{eq:without_random}
    \matr{u^{(i+1)}} = \matr{L^{(i)}} \matr{u^{(i)}} 
\end{equation}
where $\matr{L^{(i)}}$ is the $i$-th \emph{left transition matrix} of size $|V_{i+1}|\times|V_i|$ transformed from $\matr{M^{(i)}}$
\begin{equation}
    \matr{L^{(i)}} = \matr{M^{(i)}} \cdot
    \begin{bmatrix} 
    \frac{1}{{{||\matr{M}_{:,1}||}}_1} & 0 & 0 & \dots & 0 \\
    0 & \frac{1}{{{||\matr{M}_{:,2}||}}_1} & 0 & \dots & 0 \\
    \vdots & \vdots & \vdots & \ddots & \vdots \\
    0 & 0 & 0 & \dots  & \frac{1}{{{||\matr{M}_{:,|V_{i+1}|}||}}_1}
    \end{bmatrix}
\end{equation}

However, we cannot determine an ordering if multiple entries in $\matr{u^{(i+1)}}$ have the same barycentre value. Consequently, we add a random matrix $\matr{S_{L^{(i)}}}$ of the same size as $\matr{L^{(i)}}$ with normalized rows (row entries summing to 1) to equation \ref{eq:without_random} by a factor $\alpha_1 \in [0, 1]$. Then equation \ref{eq:without_random} becomes
\begin{equation}
    \label{eq:with_random}
    \matr{u^{(i+1)}} =  \matr{\widetilde{L}^{(i)}}\matr{u^{(i)}}
\end{equation}
where we have the \emph{modified left transition matrix} as $\matr{\widetilde{L}^{(i)}} = (1-\alpha_1)\matr{L^{(i)}} + \alpha_1 \matr{S_{L^{(i)}}}$.

Similarly, given position vector $\matr{u^{(i)}}$ with $i \in [2, n]$, we update $\matr{u^{(i-1)}}$ such that $u^{(i-1)}_j = B_{R}(v_j)$ for $v_j \in V_{i-1}$. Here we define the \emph{modified right transition matrix} $\matr{\widetilde{R}^{(i)}} = (1-\alpha_1)\matr{R^{(i)}} + \alpha_1 \matr{S_{R^{(i)}}}$ where the original \emph{right transition matrix} $\matr{R^{(i)}}$ is again transformed from $\matr{M^{(i)}}$ via
\begin{equation}
    \matr{R^{(i)}} = 
    \begin{bmatrix} 
    \frac{1}{{{||\matr{M}_{1,:}||}}_1} & 0 & 0 & \dots & 0 \\
    0 & \frac{1}{{{||\matr{M}_{2,:}||}}_1} & 0 & \dots & 0 \\
    \vdots & \vdots & \vdots & \ddots & \vdots \\
    0 & 0 & 0 & \dots  & \frac{1}{{{||\matr{M}_{|V_{i}|,:}||}}_1}
    \end{bmatrix} \cdot \matr{M^{(i)}}. 
\end{equation}
Then we have the matrix form of the process as
\begin{equation}
    \label{eq:right}
    \matr{u^{(i-1)}} =  \matr{\widetilde{R}^{(i)}}\matr{u^{(i)}}.
\end{equation}

In addition, both modified transition matrices have the following properties: (1) all entries are non-negative as the entries of $\matr{M^{(i)}}$ are non-negative weights of edges, (2) all row sums equal 1. 

With the above, given position vector $\matr{u^{(1)}}$, with equation \ref{eq:without_random}, we have $\matr{u^{(2)}} = \matr{\widetilde{L}^{(1)}} \matr{u^{(1)}} $, which means that vertices in $V_2$ are placed based on the positions of nodes in $V_1$ such that each node in $V_2$ is at its left barycentre. Then, with $\matr{u^{(2)}}$ calculated, we can calculate $\matr{u^{(3)}}$ and so forth. Finally we can obtain
\begin{equation}
    \label{eq:l}
    \matr{u^{(n)}} = \matr{\widetilde{L}^{(n-1)}} \matr{u^{(n-1)}}
              = \matr{\widetilde{L}^{(n-1)}} \matr{\widetilde{L}^{(n-2)}} \matr{u^{(n-2)}}
              = \cdots
              = \Big(\prod_{i = 1}^{n - 1}{\matr{\widetilde{L}^{(n - i)}}}\Big)\matr{u^{(1)}}.
\end{equation} 
Therefore, by propagating the position vector of $\matr{u^{(1)}}$ thorough the product of the position matrix from $\matr{\widetilde{L}^{(n-1)}}$ to $\matr{\widetilde{L}^{(1)}}$, we are able to place all vertices in all levels in their left barycentres except for the first level based on $\matr{u^{(1)}}$.

On the other hand, given position vector $\matr{u^{(n)}}$, we can also get from equation \ref{eq:right} the following equation:
\begin{equation}
    \label{eq:r}
    \matr{u^{(1)}} = \matr{\widetilde{R}^{(1)}} \matr{u^{(2)}}
              = \matr{\widetilde{R}^{(1)}} \matr{\widetilde{R}^{(2)}} \matr{u^{(2)}}
              = \cdots
              = \Big(\prod_{i = 1}^{n - 1}{\matr{\widetilde{R}^{(i)}}}\Big)\matr{u^{(n)}}.  
\end{equation}
Note that in this way all vertices in $V_i$ with $i \in [1, n-1]$ are placed in their right barycentres based on $\matr{u^{(n)}}$.

From above, we can formally describe the Markov Chain Method. We set an initial position vector $\matr{u^{(1)}}$ and use it to calculate $\matr{u^{(n)}}$ from equation \ref{eq:l}, with which we can reversely update $\matr{u^{(1)}}$ by equation $\ref{eq:r}$. We repeat the above process until it converges, i.e. both $\matr{u^{(1)}}$ and $\matr{u^{(n)}}$ remain unchanged in iteration. This indicates that each vertex in $V$ is placed on both its left and right barycentres, meaning that $B_{L}(v_j) = B_{R}(v_j)$ for any $v_j \in V$. From equation \ref{eq:two_sided_barycentre}, we have $B(v_j) = B_{L}(v_j) = B_{R}(v_j)$ for all $v_j$ defined above, showing that we are able to achieve a barycentre ordering with this process.

Let $\matr{L} = \prod_{i = 1}^{n - 1}{\matr{\widetilde{L}^{(n - i)}}}$ of size $|V_1| \times |V_n|$ and $R = \prod_{i = 1}^{n - 1}{\matr{\widetilde{R}^{(i)}}}$ of size $|V_n| \times |V_1|$, we simplify the above process as
\begin{equation}
    \label{eq:not_markov}
    \matr{u^{(1)}} = \matr{R}\matr{u}_n = \matr{R}\matr{L}\matr{u^{(1)}} \coloneqq \matr{T} \matr{u^{(1)}}.
\end{equation}
Here we define \emph{transition matrix} $\matr{T} = \matr{R}\matr{L}$. We show in the following that it is a \emph{right stochastic matrix}, i.e. a non-negative real square matrix with each row summing to 1 that are used to represent the probabilities in the transition of a Markov chain. Firstly, $\matr{T}$ is of size $|V_1| \times |V_1|$ and therefore a square matrix. Moreover, by Property 2 of $L_i$ and $R_i$, all entries in $L_i$, $R_i$ and consequently $L$, $R$ are obviously non-negative real values. It remains to show that $\matr{T}$ is a row-normalized matrix. To this end, we first prove the following lemma:

\begin{lemma}
    \label{lemma:multi}
    Let $l \times m$ $\matr{A}$, $m \times n$ $\matr{B}$ be two row-normalized matrices, then their product $\matr{P} = \matr{A}\matr{B}$ is also a row-normalized matrix.
\end{lemma}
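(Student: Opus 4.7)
The plan is to reduce the claim to a single one-line matrix identity by encoding row-normalization as an eigenvector condition. A matrix $\matr{M}$ of size $p \times q$ is row-normalized precisely when $\matr{M}\matr{1}_q = \matr{1}_p$, where $\matr{1}_k$ denotes the column vector of ones of length $k$. Under this equivalent formulation, the hypothesis becomes $\matr{A}\matr{1}_m = \matr{1}_l$ and $\matr{B}\matr{1}_n = \matr{1}_m$, and the conclusion becomes $\matr{P}\matr{1}_n = \matr{1}_l$.

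The main step is then just associativity of matrix multiplication: compute
\begin{equation*}
\matr{P}\matr{1}_n = (\matr{A}\matr{B})\matr{1}_n = \matr{A}(\matr{B}\matr{1}_n) = \matr{A}\matr{1}_m = \matr{1}_l,
\end{equation*}
which yields the lemma immediately. If a more elementary presentation is preferred, the same argument can be written component-wise by expanding the $i$-th row sum
\begin{equation*}
\sum_{k=1}^{n} P_{ik} = \sum_{k=1}^{n}\sum_{j=1}^{m} A_{ij}B_{jk} = \sum_{j=1}^{m} A_{ij}\Big(\sum_{k=1}^{n} B_{jk}\Big) = \sum_{j=1}^{m} A_{ij} = 1,
\end{equation*}
where the inner sum collapses to $1$ by the row-normalization of $\matr{B}$ and the outer sum collapses to $1$ by the row-normalization of $\matr{A}$. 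Both interchanges of summation are legitimate since the sums are finite.

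There is essentially no obstacle here; the only thing worth being careful about is the dimension bookkeeping (using $\matr{1}_m$ versus $\matr{1}_n$ in the right places) and, if the lemma is later applied to the full products $\matr{L}$ and $\matr{R}$ used in the Markov Chain Method, noting that the argument extends by a trivial induction to any finite product of conformable row-normalized matrices. This inductive corollary is what will ultimately certify that $\matr{T} = \matr{R}\matr{L}$ inherits row-normalization from the factors $\matr{\widetilde{L}^{(i)}}$ and $\matr{\widetilde{R}^{(i)}}$, which is the reason the lemma is stated here in the first place.
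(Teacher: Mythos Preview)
Your proof is correct. The component-wise computation you give as the ``more elementary presentation'' is exactly the paper's argument: expand $\sum_k P_{ik}$, swap the finite sums, and collapse twice using the row-normalization of $\matr{B}$ and then of $\matr{A}$. Your primary formulation via $\matr{M}\matr{1}_q = \matr{1}_p$ is a cleaner packaging of the same idea---it trades the explicit double sum for associativity of matrix multiplication---and has the minor advantage that the inductive extension to longer products (needed for $\matr{L}$, $\matr{R}$, and $\matr{T}$) is immediate without rewriting the sums. Neither approach involves anything the other lacks; they are the same computation at two levels of abstraction.
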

\begin{proof}
    We denote $\matr{A}$ and $\matr{B}$ as
    \begin{equation}
        \matr{A} = \begin{bmatrix}
            a_{11} & a_{12} & a_{13} & \dots  & a_{1m} \\
            a_{21} & a_{22} & a_{23} & \dots  & a_{2m} \\
            \vdots & \vdots & \vdots & \ddots & \vdots \\
            a_{l1} & a_{l2} & a_{l3} & \dots  & a_{lm}
        \end{bmatrix}
        and \:
        \matr{B} = \begin{bmatrix}
            b_{11} & b_{12} & b_{13} & \dots  & b_{1n} \\
            b_{21} & b_{22} & b_{23} & \dots  & b_{2n} \\
            \vdots & \vdots & \vdots & \ddots & \vdots \\
            b_{m1} & b_{m2} & b_{m3} & \dots  & b_{mn}
        \end{bmatrix}.
    \end{equation}
    Then we have
    \begin{equation}
        p_{ij} = a_{i1}b_{1j} + \dots + a_{im}b_{mj}
    \end{equation}
    and the sum of the $i$-th row of $\matr{P}$ is
    \begin{equation}
        \begin{array}{cc}
        \sum^{n}_{j=1} p_{ij} & = \sum^{n}_{j=1} (a_{i1}b_{1j} + \dots + a_{im}b_{mj})\\ & = a_{i1} \sum^{n}_{j=1} b_{1j} + \dots + a_{im}\sum^{n}_{j=1} b_{mj}
            \\ &= a_{i1} + \dots + a_{im} = 1
        \end{array}
    \end{equation}
\end{proof}

With lemma \ref{lemma:multi}, it follows that the product of more than two row-normalized matrices is still row-normalized. Thus, we have that $\matr{L}$, $\matr{R}$ and therefore $\matr{T}$ are row-normalized matrices. This completes the proof that $\matr{T}$ is a right stochastic matrix. As a result, we rewrite equation \ref{eq:not_markov} in the form of a Markov chain
\begin{equation}
    \matr{u^{(1)}}_{k + 1} = \matr{T}\matr{u^{(1)}}_k
\end{equation}
where convergence is guaranteed.

To solve a Markov chain $\matr{\pi_{n+1}} = \matr{P}\matr{\pi_{n}}$, we need to find the stationary distribution $\pi$ which satisfies $\pi = \matr{P}\pi$, i.e. $\pi$ is invariant by the transition matrix $\matr{P}$. Normally, the stationary distribution is first right eigenvector $\matr{x}_1$ corresponding to the largest eigenvalue $\lambda_1 = 1$ of $\matr{T}$. However, in our case, all entries in $\matr{x}_1$ are identical, which means that all vertices in $V_1$ should be placed at the same position. Consequently, vertices in other levels will also be placed at the same position as the barycentre remains unchanged. The resultant order is not a barycentre ordering by our previous definition.

On the other hand, the second largest sign-less eigenvalue of $\matr{T}$, $\lambda_2$, gives a heuristic solution for the Markov chain. One example of the usage of $\lambda_2$ comes from \cite{Liu2011}, where the author aims to use Markov chain to solve the NP-complete state clustering problem. In their case, the eigenvector associated with the largest eigenvalue yields a trivial solution having all vertices in one cluster. The second largest eigenvalue, however, gives a eigenvector that generates a satisfying approximation of the proper clustering. In our case, we find that $\matr{x}_2$ is also a competent alternative to $\matr{x}_1$. The sign of an eigenvalue is insignificant here as it just inverts the resultant ordering upside down without changing the crossing of edges.

Given $\matr{u^{(1)}}$ from solving the Markov chain, we obtain all other position vectors by $\matr{u^{(i+1)}} = \matr{L^{(i)}}\matr{u^{(i)}}$ and subsequently the ordering $\sigma$. This completes the Markov Chain Method.

The Markov Chain Method cannot yield a complete barycentre ordering for the following two reasons: (1) the random component in transition matrix $T$ from modified left/right transition matrix affects the final ordering; (2) the use of the second largest eigenvector is not optimal in nature. We propose a solution for the first problem. We solve the second problem in Stage 2 using the Partition Refinement Method.

Adding a random component in the left/right transition matrix avoids the problem of having multiple vertices with the same barycentre for which an ordering is impossible to determine. However, the calculated barycentres incorporate the randomness and then pass on to the subsequent levels. As a result, output of the Markov chain method is different each time. In this case, we repeat the Markov Chain Method for a predefined $N$ times and each time we calculate and record the output ordering $\sigma_{k}$ and its weighted crossing $K(G(V, E, n, \sigma_{k}))$. Then we choose the ordering with minimum weighted crossing as the \emph{best-in-$N$} ordering of Stage 1. This is facilitated by the fact that there are various efficient methods for calculating the second largest eigenvalue and the corresponding eigenvector. While by the nature of randomness, the best-in-$N$ ordering is still not a complete barycentre ordering, we have that the larger $N$ is, the smaller weighted crossing number the best-in-$N$ ordering has.

% Subsection Stage 2
\subsection{Stage 2: Partition Refinement Method}

In Stage 1, we view each vertex as a point in its level. As a result, it gives the position value of the corresponding point when calculating the barycentres of all its connected vertices. Stage 2, on the other hand, allows a vertex to give each connected vertex an individual position value within a certain range for the calculation of barycentre. That is, instead of using a point to represent a vertex, each vertex is given a range, called \emph{block}, within the level while the end of each edge connecting this vertex is symbolized as a point in this given range. For $v_k \in N(v_j)$, we use $P(v_j, v_k)$ to denote the point of edge on the block of $v_k$ and $p(v_j, v_k)$ for the position value of $P(v_j, v_k)$.

Apparently, $p(v_j, v_k)$ should be dependent on the positions and therefore orders of $v_j$ and $v_k$ in their own levels. First of all, the $i$-th level is splitted into $|V_i|$ blocks of equal height and the $j$-th block from top to bottom is assigned to the $j$-th vertex in the ordering $\sigma_i$. Thus, the range of $p(v_j, v_k)$ is attribute to the order of $v_j$. Furthermore, the value of $p(v_j, v_k)$ within the block follows from the position of $v_k$, specifically the value of $P(v_k, v_j)$. For the $j$-th block in the $i$-th level, it has base $b_{i,j} = \frac{|V_i| - 1}{|V_i|}$ and height $h_i = \frac{1}{|V_i|}$. We obtain the initial position of $P(v_j, v_k)$ from the order of $v_j$ in best-in-$N$ ordering from Stage 1. For $v_q$, $v_p \in N_L(v_j)$, ordering of points $P(v_j, v_q)$ and $P(v_j, v_p)$ complies with that of $v_q$ and $v_p$. Ordered points of vertices in $N_L(v_j)$ are then distributed evenly within the block from top to bottom, dividing the block into $|N_L(v_j)|+1$ equal segments. Points of vertices in $N_R(v_j)$ follow the same. Let $v_j$ has order $\lambda$ in $\overline{\sigma_i}$ and a connected vertex $v_k$ has order $\mu$ in $N_L(v_j)$, then
\begin{equation}
    \label{eq:p_left}
    p(v_j, v_k) = b_{i, \lambda} + \frac{|N_L(v_j)|+1-\mu}{|N_L(v_j)|+1} h_i.
\end{equation}
If $v_k$ belongs to $N_R(v_j)$, Equation $\ref{eq:p_left}$ uses $|N_R(v_j)|$ instead of $N_L(v_j)$.

With $\overline{\sigma}$ from Stage 1, we can assign the vertices based on their order in the level. Subsequently, we obtain initial values for all points with the above equation. Then the iteration begins with calculating the barycentre for each of the middle levels from left to right. Starting with the second level, we obtain the barycentre of each vertex from both the first and the third levels. For the calculation of $B_L(v_j)$, rather than using the position vector $\matr{u^{(i-1)})}$ as in Equation $\ref{eq:bc_l}$, we use a position vector unique for this $v_j$ where
\begin{equation*}
    \matr{l}(v_j) = \left\{
\begin{array}{lcl}
p(v_k, v_j)      &      &  v_k \in N_L(v_j)\\
0       &      & v_k \in V_i \backslash N_(v_j).
\end{array} \right.
\end{equation*}
And we have
\begin{equation*}
B_L(v_j) = \matr{\widetilde{L}^{(i-1)}_{j,:}} \cdot \matr{l}(v_j),
\end{equation*}
\begin{equation*}
B_R(v_j) = \matr{\widetilde{R}^{(i)}_{j,:}} \cdot \matr{r}(v_j).
\end{equation*}
Then the two-sided barycentre can be derived from Equation \ref{eq:two_sided_barycentre}. With all barycentres of vertices in the second layer calculated, we have a new ordering $\sigma_2$ complying with the descending ordering of the barycentres. On the basis of the new $\sigma_2$, we reassign a block for each vertex. Asides from the change in the range of value, new position of a point $P(v_j, v_k)$ is also under the affect of point $P(v_k, v_j)$'s position. Instead of evenly distributing the points within the block as before, we let $P(v_j, v_k)$ takes the value of $p(v_k, v_j)$ after being scaled to the height of the block. Thereupon, let the order of $v_j$ in the new $\sigma_2$ be $\lambda^\prime$, Equation $\ref{eq:p_left}$ becomes
\begin{equation*}
    p(v_j, v_k) = b_{i, \lambda^\prime} + h_i p(v_k, v_j).
\end{equation*}

The calculation of the remaining middle levels follows. For each vertex in the last level, its barycentre is equal to its left barycentre. After updating the ordering and points for the last level, we go backwards and repeat the operation on the middle levels from right to left. Procedure on the first level is similar to that of the last. This completes one iteration.

This iteration converges to a complete barycentre ordering when the ordering remains unchanged for a predetermined $M$ times. However, as mentioned before, a complete barycentre ordering is not necessarily an optimal one. It is therefore possible that we encounter an ordering with less weighted crossing than the final complete barycentre ordering. In light of this possibility and our ultimate aim of reducing weighted crossing, we keep a record of the weighted crossing from each iteration if the resultant ordering is different from its predecessor. This allows us to choose the ordering with minimum weighted crossing produced in the process as the output ordering of Stage 2.

% Method - cycle
\section{Modified Method on the cycle form}

In this section, we modify the method introduced in previous section to reduce weighted crossing of the specified cycle form of Sankey diagram. We start by formulating the cycle form of the Sankey diagram as a \emph{circular layer graph}. We refer the links connecting the last and the first level as the \emph{binding links}. By ignoring the binding links, the cycle form becomes a parallel one with $\overline{G} = (V, \overline{E} = \{E_i | i \in [1, n-1]\}, n)$. On the other hand, for the binding links, we use $E_n$ to denote corresponding the edge set.  Subsequently, we have the graph for the original cycle form by adding back $E_n$, i.e. $G = (V, E = \overline{E} \bigcup \{E_n\}, n)$.

For Stage 1, we supply $\overline{G}$ to the Markov Chain method. In this way, Stage 1 yields the best-in-$N$ ordering without considering the binding links. Subsequently, we utilize Stage 2 to take the binding links back into consideration. 

In Stage 2, the first modification we make to the Partition Refinement method is on the calculation of barycentre. In the previous section, the barycentre of vertex in $V_1$ is just the barycentre of its right neighboring set. Now each vertex in $V_1$ also has a left neighboring set consisting of connected vertices in the last level, allowing the Equation \ref{eq:two_sided_barycentre} to be applicable. Similarly, vertices in the last level also have their own right neighboring set for calculating the barycentres.

Given the circular nature of graph $G$, the route of iteration also changes. Instead of adopting the "back and forth" manner in the previous section, after obtaining a new $\sigma_n$, we proceed to apply the method on the first level. The modified Partition Refinement method is summarized as pseudo code in Algorithm \ref{algo:modified}.

\begin{algorithm}[htb]
\caption{ Modified Partition Refinement Method }
\label{algo:modified}
\begin{algorithmic}[1]
\REQUIRE ~~\\
The best-in-$N$ ordering $\overline{\sigma}$ from Stage 1.
The maximum repeat number $M$.
\ENSURE ~~\\
Improved ordering $\Tilde{\sigma}$.
\FOR{each $m \in [1, M]$}
\FOR{each $i \in [1, n]$}
    \STATE{Use $\sigma_{i-1}$ and $\sigma_{i+1}$ to update $\tau^{(i)}$ and $\epsilon^{(i)}$;}
    \label{code:pr:m_sub_order}
    \STATE{Use $\tau^{(i)}$ and $\epsilon^{(i)}$ to update $\matr{r^{(i)}}$ and $\matr{l^{(i)}}$;}
    \label{code:pr:m_sub_pos}
    \STATE{Calculate the barycentres of vertices in $V_i$ and update the position vertex $\matr{u^{(i)}}$;}
    \label{code:pr:m_pos}
    \STATE{Use new $\matr{u^{(i)}}$ to update ordering
    $\sigma_{i}$;}
    \label{code:pr:m_order}
\ENDFOR
\ENDFOR
\RETURN refined ordering 
\end{algorithmic}
\end{algorithm}

% Results
\section{Result}

We prove the efficiency of our method on the \emph{parallel form} by comparing our result with that of the state-of-art heuristic method in \cite{Alemasoom2016}. Moreover, we show that our method produces near-optimal result by demonstrating a low contrast between the our performance and that of the optimal ILP method in \cite{optimal}. Besides from visual comparison, we also measure the performances of the above methods by both the weighted and non-weighted crossing of the output ordering. In particular, resultant orderings from the three comparing methods are obtained from the corresponding Sankey diagrams provided in their articles. To increase the sensitivity of our method towards edges with considerably small weights, we use logarithm (with base 10) for the edge weights in the above tests to reduce their difference. For our adapted method on the \emph{cycle form}, we apply it on a graph with zero crossing to see if it can achieve the optimal result in this case. We also conduct a robust test to validate the consistency of our method against varying complexity of graph.

% Subsection Test against State-of-Art Heuristic Method
\subsection{Test against State-of-Art Heuristic Method}

In \cite{Alemasoom2016}, they applied their method on Canada's energy usage data in 1978 and Figure \ref{fig:heurResult} displays their resultant Sankey diagram. We apply our method on the same dataset. In Stage 1, we set $\alpha_1 = 0.1$ and $N = 100$. In Stage 2, we set $\alpha_2 = 0.1$, $M = 100$ and obtain the convergent result with only 10 iterations. Figure \ref{fig:heur1} and Figure \ref{fig:heur2} give the Sankey diagrams with orderings from Stage 1 and 2 of our algorithm respectively. Table \ref{tab:heur} summarizes the weighted and non-weighted crossing of orderings from all three Sankey diagrams in Figure \ref{fig:heur}.

\begin{figure}
     \centering
     \begin{subfigure}[b]{\textwidth}
         \centering
         \includegraphics[height=4cm]{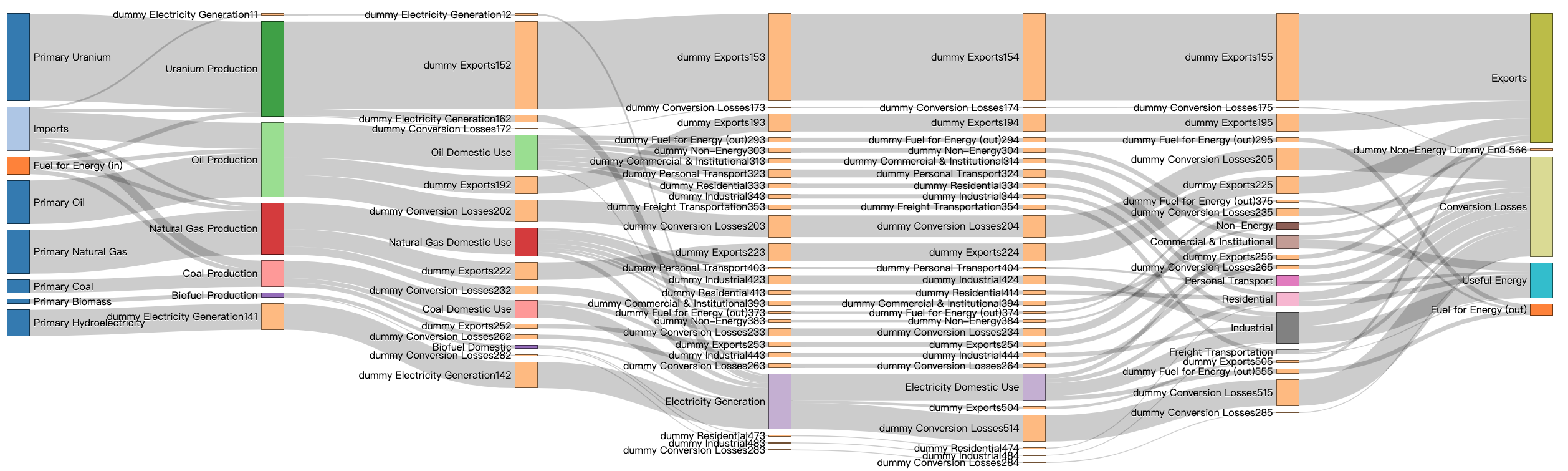}
         \caption{Sankey diagram produced by \cite{Alemasoom2016}}
         \label{fig:heurResult}
     \end{subfigure}
     \hfill
     \begin{subfigure}[b]{\textwidth}
         \centering
         \includegraphics[height=4cm]{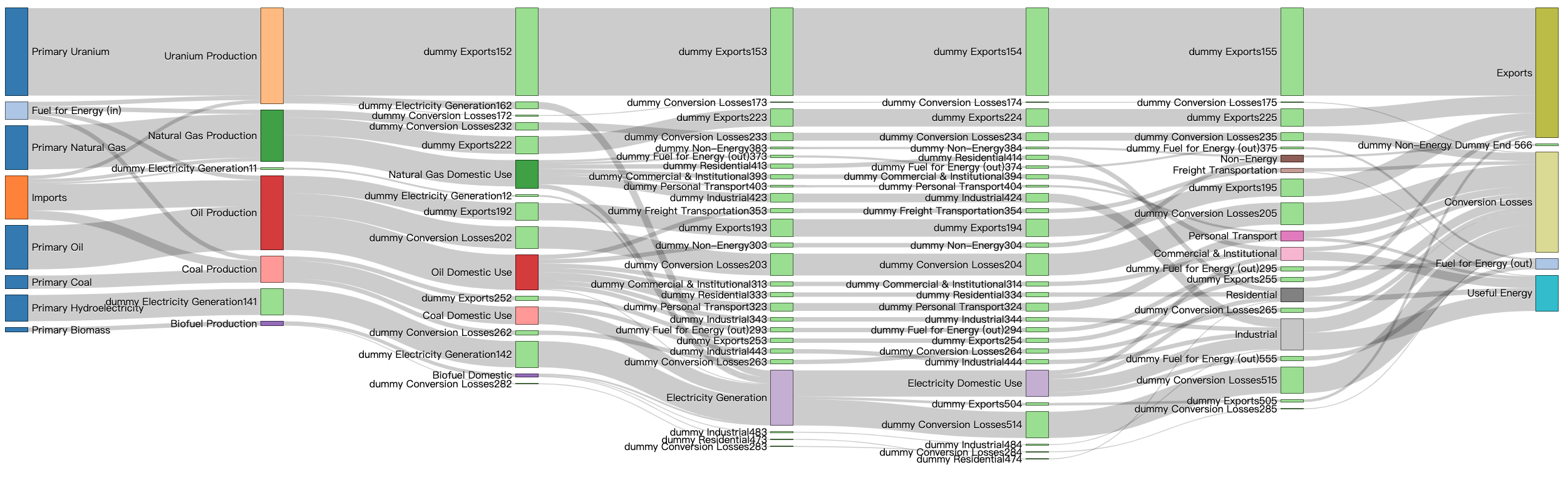}
         \caption{Sankey diagram produced by Stage 1}
         \label{fig:heur1}
     \end{subfigure}
     \hfill
     \begin{subfigure}[b]{\textwidth}
         \centering
         \includegraphics[height=4cm]{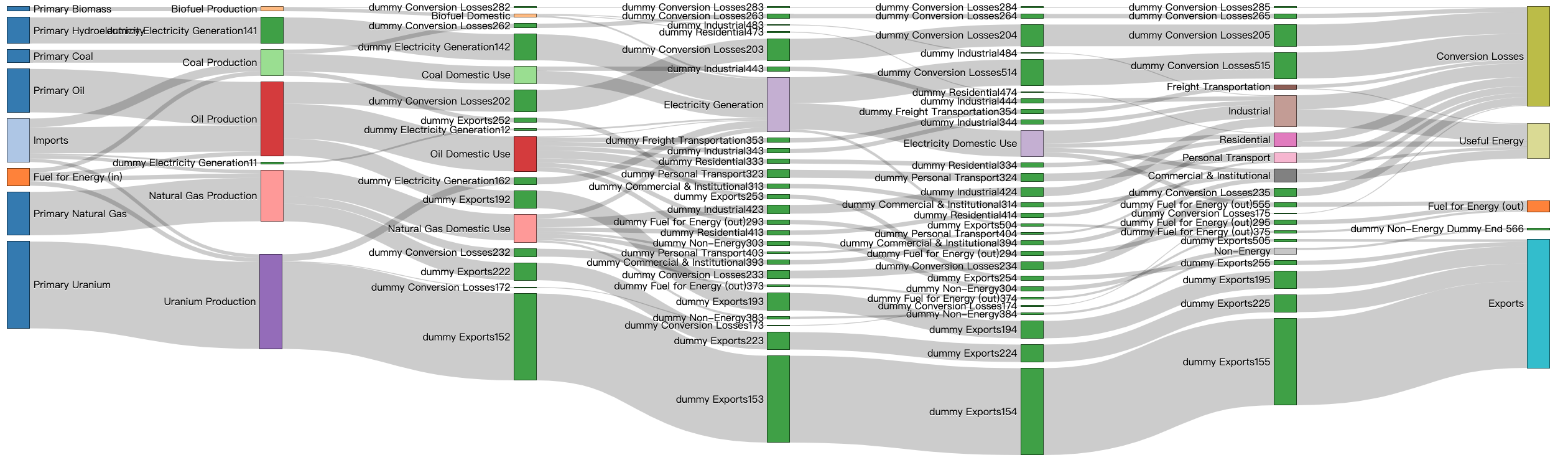}
         \caption{Sankey diagram produced by Stage 2}
         \label{fig:heur2}
     \end{subfigure}
        \caption{Three Sankey diagrams corresponding the state-of-art heuristic method and two stages of our method}
        \label{fig:heur}
\end{figure}

\begin{table}[h]
    \centering
    \begin{tabular}{l | l | l | l}
    Ordering from:& Figure \ref{fig:heurResult} & Figure \ref{fig:heur1} & Figure \ref{fig:heur2} \\
    \hline
    Weighted Crossing & 146.77 & 112.59 & 87.855 \\
    Non-Weighted Crossing & 279 & 226 & 158.0
    \end{tabular}
    \caption{Summarization of measurements of orderings from the three Sankey diagrams in \ref{fig:heur}}
    \label{tab:heur}
\end{table}

Base on the comparison of both crossing measurements, we see that even without the refinement in Stage 2, the output from Stage 1 already surpass that of the heuristic method. And the improvement in both measurements from Stage 1 to Stage 2 validates the effectiveness of Stage 2. To be specific, we find that Stage 2 is able to resolve some unnecessary crossings in Stage 1 resulting from the additional random component. On the other hand, Stage 1 gives a satisfying semi-barycentre ordering such that Stage 2 only takes a few iterations to converge.

% Subsection Test against ILP Method and BC Method
\subsection{Test against ILP Method and BC Method}

This test examines the difference between output from our method and the optimal output from the ILP method in \cite{optimal} to see if we have a near-optimal result. Moreover, we compare our result against that of the BC method which shares our idea of finding a barycentre ordering such that we can can demonstrate our ability to produce a better barycentre ordering.

In this test, we use the same dataset as in \cite{optimal}: the "World Greenhouse Gas Emissions" data from the World Resource Institute \cite{greenhousedata}. From \cite{optimal} we have results of both ILP method and the BC method on this dataset. For our method, we supply $\alpha_1 = 0.01$, $N=100$ to Stage 1 and $\alpha_2 = 0.1$, $M = 100$ to Stage 2 which converges with only one iteration. For the output orderings of the two comparing methods and the two stages of our method, we calculate their weighted and non-weighted crossings and summarize them in Table \ref{tab:ilp}. We also plot the four orderings as Sankey diagrams in \ref{fig:ilp} for visual comparison.

Comparing Figure \ref{fig:ILP_optimal} and Figure \ref{fig:ILP_stage_2}, we see that we are still at a relatively small distance from the optimal layout. Table \ref{tab:ilp} shows that although the output from Stage 2 has less non-weighted crossing number but still larger weighted crossing number.

On the other hand, our output excels from that of the BC method from both visual aspect and two crossing measurements. Also, BC method is an iterative method and therefore requires time to achieve a near-optimal ordering. In contrast, the output from Stage 1 already suffices as a near-optimal ordering. Moreover, iteration times $M$ to convergence in Stage 2 is also small.

\begin{table}[h]
    \centering
    \begin{tabular}{l | l | l | l | l}
    Ordering from:& Figure \ref{fig:ILP_heuristic} & Figure \ref{fig:ILP_optimal} & Figure \ref{fig:ILP_stage_1} & Figure \ref{fig:ILP_stage_2}\\
    \hline
    Weighted Crossing & 1220.07 & 156.64 & 300.89 & 278.68 \\
    Non-Weighted Crossing & 322 & 125 & 126 & 121
    \end{tabular}
    \caption{Table summarizing the measurements of orderings from the three Sankey diagrams in Figure \ref{fig:ilp}.}
    \label{tab:ilp}
\end{table}

\begin{figure}
     \centering
     \begin{subfigure}[b]{\textwidth}
         \centering
         \includegraphics[height=4cm]{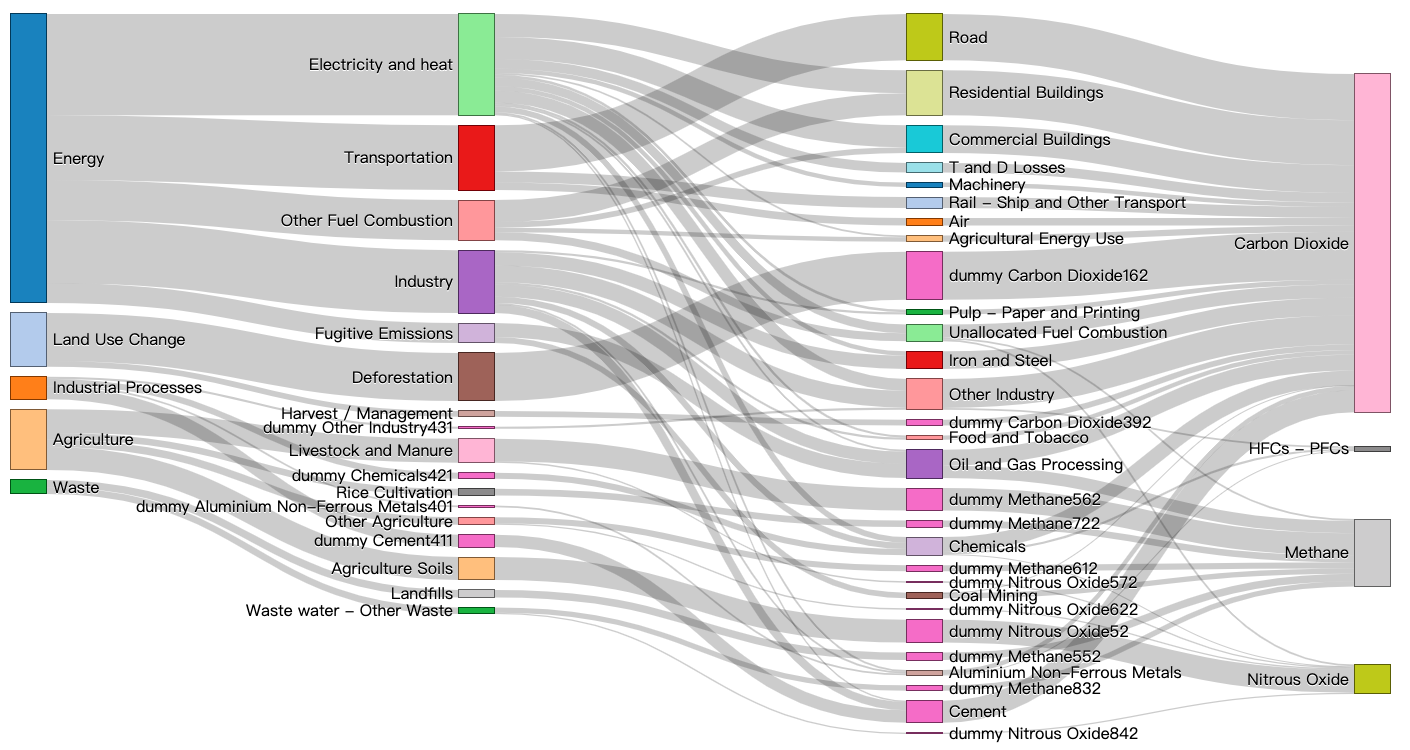}
         \caption{Sankey diagram produced by the BC method}
         \label{fig:ILP_heuristic}
     \end{subfigure}
     \hfill
     \begin{subfigure}[b]{\textwidth}
         \centering
         \includegraphics[height=4cm]{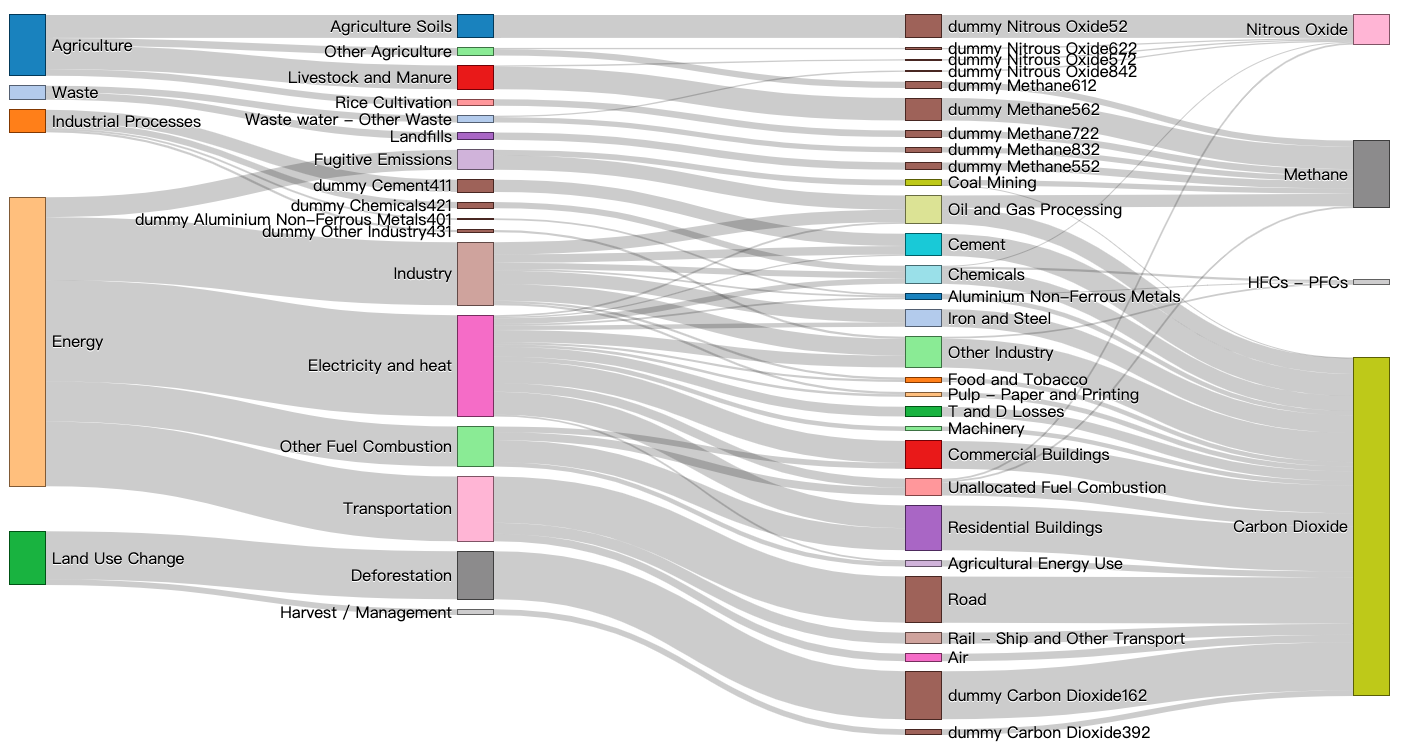}
         \caption{Sankey diagram produced by the ILP method}
         \label{fig:ILP_optimal}
     \end{subfigure}
     \hfill
     \begin{subfigure}[b]{\textwidth}
         \centering
         \includegraphics[height=4cm]{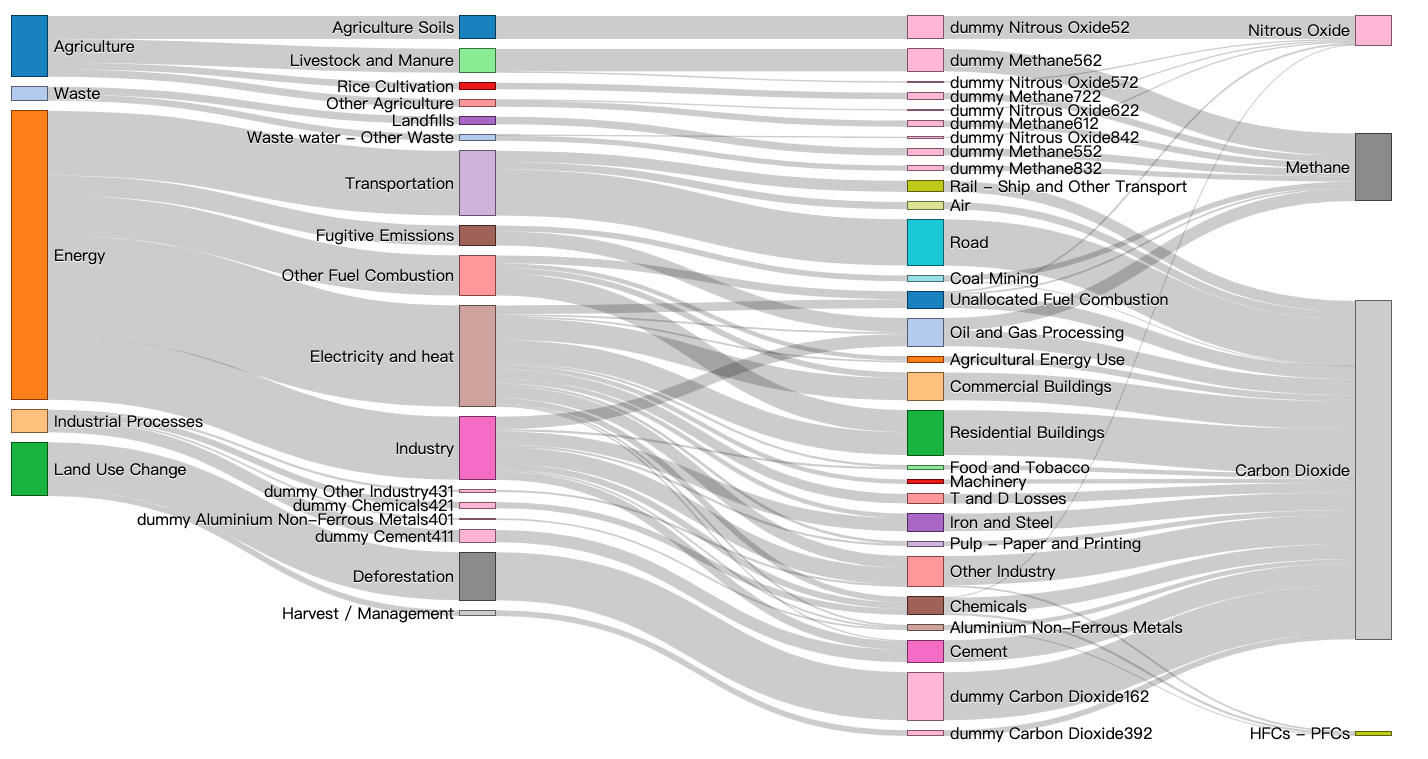}
         \caption{Sankey diagram produced by Stage 1}
         \label{fig:ILP_stage_1}
     \end{subfigure}
     \hfill
     \begin{subfigure}[b]{\textwidth}
         \centering
         \includegraphics[height=4cm]{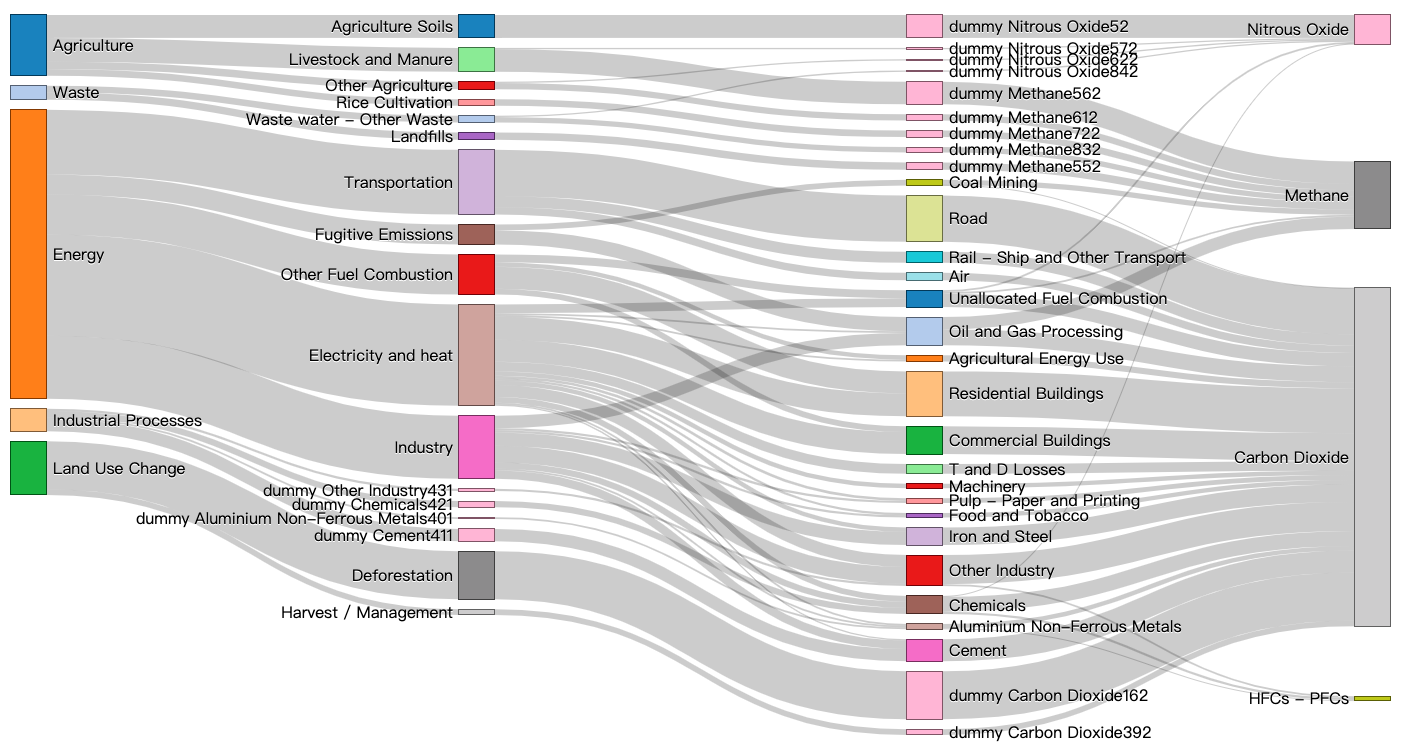}
         \caption{Sankey diagram produced by Stage 2}
         \label{fig:ILP_stage_2}
     \end{subfigure}
    \caption{Four graph outputs}
    \label{fig:ilp}
\end{figure}

% Subsection Test for cycle rep
\subsection{Test of Modified Method on the Cycle Form}

In this test we apply the modified algorithm on an artificial dataset of \emph{cycle form}. This dataset has the same number of level and same number of vertices in each level as the dataset in the previous test against the ILP method and the BC method. In particular, this artificial data has a known optimal layout which has zero crossing number. 

In this test case, we find that Stage 1 alone is able to produce an optimal layout with $N=100$, $\alpha_1 = 0.01$. To test the effectiveness of Stage 2, we reset $N = 50$ in Stage 1 and obtain an ordering of $K = 4$ and $\overline{K} = 2$. Stage 2 then refines the Best-in-N ordering to the optimal ordering within 2 iterations with $\alpha_2 = 0.01$.

% Subsection Test for robust test result
\subsection{Robust Test}

With this test, we aim to show the stability of our method towards cases of different complexity level. The complexity level of a graph is measured by the number of level (denoted as $n$) and the number of vertex in each level (denoted as $\overline{V}$). Consequently, we vary both $n$ and $\overline{V}$ and for each pair we generate ten different random cases. The total edge number of a random case is considered an estimation of its complexity. All test case run with $\alpha_1 = 0.01$, $N=100$ for Stage 1 and $\alpha_2 = 0.1$, $M = 100$ for Stage 2. We record for each case the weighted crossing produced by both Stage 1 and Stage 2, as well as the optimal result from the ILP method for comparison. Let a result from our method be $x$ and the corresponding optimal result be $y$, we measure their difference by a ratio 
\begin{equation}
    \textit{r} = \frac{x + \epsilon}{y + \epsilon}
\end{equation}
where $\epsilon$ is a very small number to avoid the denominator being 0 when $y=0$. The reason we do not use the difference between $x$ and $y$ directly is because as the complexity of the graph increase, the resultant weighted crossing and therefore the difference are also increasing.

\begin{figure}[h]
    \centering
    \includegraphics[width=\textwidth]{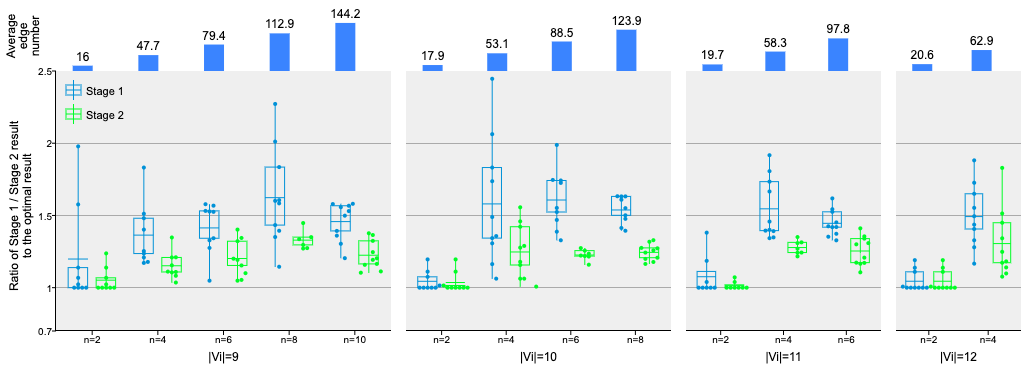}
    \caption{Summarization of result for the robust test}
    \label{fig:robust}
\end{figure}

We summarize the result of the robust test in Figure $\ref{fig:robust}$. For starters, it shows that most of result from Stage 1 are less than twice of the ILP result. Stage 2 further improves the result to be no more than 1.5 times of the ILP result. We believe this demonstrates the consistency of our method's performance on cases with various complexity.

% Conclusion
\section{Conclusion}

In this paper, we investigate the NP-hard weighted crossing reduction problem for the Sankey diagram. Other than the common parallel form of Sankey diagram, we also study a particular circular form where the first and the last layers are connected.

% Our proposed method, aiming to find a barycentre ordering, are composed of two stages. The first stage employs the Markov Chain method which develops the calculation of barycentres into a Markov chain and solves for an semi-barycentre ordering. By repeating the Markov Chain method for a given times, Stage 1 selects the best-in-$N$ ordering as the one with the least weighted crossing. The second stage serves to improve the best-in-$N$ with the Partition Refinement Method. 

Our heuristic method, aiming to find a barycentre ordering, are composed of two stages. The first stage employs the Markov Chain method and the second stage serves to improve Stage 1's output with the Partition Refinement Method. We also adapt this method to be applicable for reducing weighted crossing in the specified circular form of the diagram.

After experiments, we can conclude that our method performs nearly as well as the ILP method and surpasses the existing heuristic methods. In terms of the measurement of weighted crossing, in the ILP experiment, our method achieved 300.89 weighted crossings, very close to the 278.68 weighted crossings from the ILP method while the BC method has a weighted crossing number of 1220.07. Also, we obtained only 87.855 weighted crossings while the stage-of-art heuristic method attained 146.77. Visually speaking, we were able to obtain high readability even from complicated seven-layer data. We also performed a robust test which verified the stability of our method against changing complexity of dataset.

\bibliographystyle{plain}
\bibliography{ref}
\end{document}